\let\emptyset\varnothing
\newcommand{\ignore}[1]{}
\lstdefinestyle{Oracle}{basicstyle=\ttfamily,
                        keywordstyle=\lstuppercase,
                        emphstyle=\itshape,
                        showstringspaces=true,
                        }
\newcommand{\lstuppercase}{\uppercase\expandafter{\expandafter\lst@token
                           \expandafter{\the\lst@token}}}
\newcommand{\lstlowercase}{\lowercase\expandafter{\expandafter\lst@token
                           \expandafter{\the\lst@token}}}
\newcommand*\circled[1]{\tikz[baseline=(char.base)]{
            \node[shape=circle,draw,inner sep=1pt] (char) {#1};}} 
\newif\ifboldnumber
\algrenewcommand\alglinenumber[1]{%
  \footnotesize\ifboldnumber\bfseries\fi\global\boldnumberfalse#1:}
\newtheorem{theorem}{Theorem}[section]
\newtheorem{lemma}[theorem]{Lemma}
\newtheorem{definition}[theorem]{Definition}
\begin{document}
%\fontfamily{lmr}\selectfont
% \def\thetitle{A Practical Way to Generate Strong Keys from Noisy Data}

%\title{The First Provable Secure Mempool Designs against Asymmetric DoS}
%\title{Building Provable Secure Ethereum Mempools against Asymmetric DoS}

\title{Asymmetric Mempool DoS Security: Formal Definitions and Provable Secure Designs}

% \title{PEDDiM: Formal Definitions and Provably Secure Designs for Pre-Execution DoS Defense in Mempools}

%\title{Protecting Ethereum Mempools against Asymmetric DoS \\ \date{}}
%\title{Protecting Ethereum Mempools with Provable Secure Designs against Asymmetric DoS}
%\\ \date{}

%\title{The First Provable Secure Mempool Designs against Asymmetric DoS}

% \begin{document}

% \title{ 
% Towards Automatic Discovery of Pre-consensus Bugs in Account-based Blockchains by Symbolic Stateful Fuzzing
% }

% \title{ 
% Finding Assymmetric Mempool DoS Bugs in Account-based Blockchains by Symbolic Stateful Fuzzing
% }

% \title{ 
% Finding Asymmetric DoS Bugs in Ethereum Mempool via Symbolic Stateful Fuzzing
% }

% \title{ 
% Preventing Asymmetric Mempool DoS in Blockchains with \textsc{saferAd}
% }
% %\author{Yibo Wang, Kai Li, Yuxuan Zhou, Yuzhe Tang, Jiaqi Chen, et al.}

% \title{ 
% Protecting Blockchain Mempools against Asymmetric DoS with \textsc{saferAd}
% }

% \author{Wanning Ding, Yibo Wang, Yuzhe Tang}
\author{
  Wanning Ding \\
  Department of Computer Science \\
  Syracuse University \\
  \texttt{wding04@syr.edu}
  \and
  Yibo Wang \\
  Department of Computer Science \\
  Syracuse University \\
  \texttt{ywang349@syr.edu}
  \and
  Yuzhe Tang \\
  Department of Computer Science \\
  Syracuse University \\
  \texttt{ytang100@syr.edu}
}

% \affiliation{\small{USA}}

% \begin{abstract}
% \input{text/abstract.tex}
% \end{abstract} 

\maketitle

\begin{abstract}
The mempool plays a crucial role in blockchain systems as a buffer zone for pending transactions before they are executed and included in a block. However, existing works primarily focus on mitigating defenses against already identified real-world attacks. This paper introduces secure blockchain-mempool designs capable of defending against any form of asymmetric eviction DoS attacks. We establish formal security definitions for mempools under the eviction-based attack vector. Our proposed secure transaction admission algorithm, named \textsc{saferAd-CP}, ensures eviction-security by providing a provable lower bound on the cost of executing eviction DoS attacks. Through evaluation with real transaction trace replays, \textsc{saferAd-CP} demonstrates negligible latency and significantly high lower bounds against any eviction attack, highlighting its effectiveness and robustness in securing blockchain mempools.

\ignore{
This paper presents \textsc{mpfuzz}, the first blockchain fuzzer that automatically and systematically finds known and unknown mempool DoS bugs in Ethereum. Fuzzing mempools, unlike fuzzing other blockchain components tackled in the existing literature, poses new design problems. To solve them, \textsc{mpfuzz} explores the previously unexplored transaction space, namely invalid transactions with varying fees, and defines a bug oracle specifically for asymmetric mempool DoS. \textsc{mpfuzz} mutates transactions and explores mempool states, {\it symbolically}, that is, based on the transaction symbols tailored to Ethereum protocols. To search deep-state bugs, \textsc{mpfuzz} optimistically estimates mempool costs. 

Running \textsc{mpfuzz} leads to the discovery of new mempool DoS bugs in leading Ethereum clients of the latest versions, all of which have been patched against the known bugs. In particular, all Ethereum clients are found vulnerable to at least one of two DoS patterns:  mempool eviction or locking.

%, the security against both of which poses a fundamental design tradeoff.

This paper also presents a defensive mempool framework, \textsc{saferAd}, that provides provable security against the newly discovered DoS patterns. \textsc{saferAd} upper-bounds the victim transaction fees under mempool locking and lower-bounds the attacker transaction fees under mempool evictions.
}

%This work tackles the systematic understanding and hardening of Ethereum security against asymmetric denial of mempool service (ADAMS). 

%First, the paper presents an automatic vulnerability discovery framework that combines abstract model checking and concrete exploit generation to find functional exploits adaptively that evade protections in real-world Ethereum clients. 
%We discover seven new attack patterns. Our evaluation shows that while the recent Ethereum clients (incl. Geth $V1.10.13$ and OpenEthereum $V3.3.4$) are patched against the previous known DETER attack, they are found vulnerable to the newly discovered ADAMS attacks with high success rates ($88\%$\textasciitilde{}$96\%$) at low costs (as low as zero Ether/Gas). By evaluation on testnets, ADAMS attacks can be propagated to the entire network and deny the global mining service.

%Second, the paper presents an online mitigation scheme that prioritizes transaction validity upon admission and detects malicious transactions that form dual attacks. With a prototype implementation, we conduct experiments and show that the mitigation schemes can effectively push the attack cost to be as high as the baseline attack of sending high-price spam transactions.

\end{abstract} 

% \keywords{Blockchain, mempool, asymmetric DoS, defense}

\begin{IEEEkeywords}
Blockchain, mempool, asymmetric eviction DoS, defense
\end{IEEEkeywords}

\providecommand{\ssssp}{{\sc SS\_SSP}\xspace}
\newcommand{\tremark}[1]{\footnote{\textcolor{red}{(Ting's comment: #1)}}}
\newcommand{\xremark}[1]{\footnote{\textcolor{red}{(Xin's comment: #1)}}}
\newcommand{\jj}[1]{\footnote{\textcolor{blue}{(Jiyong: #1)}}}
\newcommand{\yz}[1]{\footnote{\textcolor{red}{(Yuzhe: #1)}}}

\definecolor{mygreen}{rgb}{0,0.6,0}
\section{Introduction}

In public blockchains, a mempool is a data structure residing on every blockchain node, responsible for buffering unconfirmed transactions before they are included in blocks. On Ethereum, mempools are present in various execution-layer clients serving public transactions, such as Geth~\cite{me:geth}, Nethermind~\cite{me:nethermind}, Erigon~\cite{me:erigon}, Besu~\cite{me:besu}, and Reth~\cite{me:reth}. They are also utilized by block builders managing private transactions within the proposer-builder separation architecture, such as Flashbots~\cite{me:flashbotbuilder}, Eigenphi~\cite{me:eigenphibuilder}, and BloXroute builders~\cite{me:bloXroutebuilder}.

Unlike conventional network stacks, mempools are permissionless and must accept transactions from unauthenticated accounts to maintain decentralization. This open nature, while essential for decentralization, makes the mempool vulnerable to denial-of-service (DoS) attacks. In such an attack, an adversary infiltrates the target blockchain network, establishes connections with victim nodes, and floods them with crafted transactions to deny mempool services to legitimate transactions. The disruption of mempool services can severely impact various blockchain subsystems, including block building, transaction propagation, blockchain value extraction (e.g., MEV searching), remote-procedure calls, and Gas stations. For instance, empirical studies have shown that disabling mempools can force the Ethereum network to produce empty blocks, undermining validators' incentives and increasing the risk of 51\% attacks.

\noindent{\bf Related works \& open problems}:
Denial of mempool services have been recently recognized and studied in both the research community and industry. In a large-scale blockchain, a mempool of limited capacity has to order transactions by certain priority criteria for transaction admission and eviction. The admission policy can be exploited to mount mempool DoS. The early attack designs~\cite{DBLP:conf/fc/BaqerHMW16,DBLP:conf/icbc2/SaadNKKNM19} work by sending spam transactions of high prices to evict benign transactions of normal prices. These attacks are extremely expensive and are not practical. 

Of more realistic threat is the {\it Asymmetric} DeniAl of Mempool Service, coined by us as ADAMS attacks, where the attacker spends much less fees in the adversarial transactions he sends than the damage he caused, that is, the fees of evicted benign transactions that would be otherwise chargeable. DETER~\cite{DBLP:conf/ccs/LiWT21} is the first ADAMS attack studied in research works where the attacker sends invalid and thus unchargeable transactions (e.g., future transactions in Ethereum) to evict valid benign transactions from the mempool. 
The MemPurge attack~\cite{cryptoeprint:2023/956} similarly evict Ethereum's (more specifically, Geth’s) pending-transaction mempool by crafting invalid overdraft transactions. 
These ADAMS attacks follow a single-step pattern and can be easily detected. In fact, DETER bugs have been fixed in Geth $v1.11.4$~\cite{me:gethfix11:thirdparty}, and there is a code fix implemented and tested against MemPurge on Geth~\cite{me:mempurge:fix}. 

More sophisticated and stealthy attacks are recently discovered by \textsc{mpfuzz}, a symbolized stateful fuzzer for testing mempools~\cite{wang2023understanding}. The discovered attacks transition mempool states in multiple steps (see \S~\ref{sec:setup:singlenode} for an example exploit, $XT_6$) and are stealthy to detection. They can evade the mitigation designed for earlier and simpler ADAMS attacks. For instance, Geth $v1.11.4$, which is already patched against DETER attacks, is found still vulnerable by \textsc{mpfuzz}, such as under the $XT_6$ attack~\cite{wang2023understanding}.

Current defenses against mempool DoS attacks typically involve patching vulnerabilities after specific attack patterns have been identified. This reactive approach, while necessary, cannot fully guarantee eviction security for the mempool, as there may be undiscovered attacks capable of bypassing these patches. Therefore, the key to ensuring robust mempool security lies in a formal understanding of mempool DoS security. Without a precise security definition, it is impossible to validate or certify the soundness or completeness of a mempool's defenses against unknown attacks, let alone design new mempools with provable security. Notably, the bug oracles used in tools like \textsc{mpfuzz} optimize search efficiency but do not guarantee completeness, further highlighting the need for a comprehensive security framework.

Note that ADAMS attacks that exploit mempool admission policies are not the only means to cause under-utilized blocks in a victim blockchain. There are other DoS vectors, notably the computing-resource-exhaustion attacks targeting any blockchain subsystems prior to block validation. Known exhaustion strategies include running under-priced smart-contract instructions~\cite{DBLP:conf/ndss/0002L20,me:eip150,me:dosblockgas} or exploiting ``speculative’’ contract execution capabilities, such as the \texttt{eth\_call} in Ethereum’s RPC subsystem~\cite{DBLP:conf/ndss/LiCLT0L21} or censorship enforcement in Ethereum PBS (proposer-builder separation) subsystem (i.e., the ConditionalExhaust attack in~\cite{cryptoeprint:2023/956}). Resource exhaustion by adversarial smart contracts is out of the scope of this paper. Besides, denial of blockchain services have been studied across different layers in a blockchain system stack including eclipse attacks on the P2P networks~\cite{DBLP:conf/uss/HeilmanKZG15,DBLP:journals/iacr/MarcusHG18,DBLP:conf/sp/ApostolakiZV17,tran2020stealthier}, DoS blockchain consensus~\cite{mirkin2019bdos,me:finney}, DoS state storage~\cite{me:NURGLEDosstorage}, etc.

This work formulates the economic-security notions to protect mempools against asymmetric eviction DoS attacks. Economic security entails fee calculation, which may be non-deterministic in Ethereum and poses challenges in mempool designs. This work focuses on the mempool DoS by exploiting transaction admission policies, which normally rely only on deterministic transaction prices and are agnostic to non-deterministic Gas or fees. 

\noindent{\bf Proposed policy}: This work presents the first definitions of asymmetric eviction mempool DoS security and the provably secure mempool designs. Specifically, we conceive a general pattern of mempool eviction-based DoSes: Attackers aim to evict the victim transactions residential in a mempool. Based on the pattern, we formulate a security definition, that is, $g$-eviction security, which requires a secure mempool to lower bound the total fees of residential transactions under arbitrary sequences of transaction arrivals. 
%Note that all known mempool DoSes, including DETER and MemPurge, cause violation of eviction security without violating locking security.

This work presents \textsc{saferAd-CP}, a transaction admission policy for securing mempools against asymmetric eviction DoSes. \textsc{saferAd-CP} can achieve eviction security by lower-bounding the fees of any arriving transactions causing eviction which is much higher than original Geth. We focus on designing admission policy in a pending-transaction mempool, that is, the pool is supposed to store only valid transactions (whereas future or other invalid transactions are buffered in a separate pool).\footnote{In the rest of the paper, the term mempool refers only to the pool storing pending transactions.} 

\textsc{saferAd-CP} prevents valid residential transactions from being turned into invalid ones by evicting only ``child-less'' transactions, that is, the ones with the maximal nonce of its sender.
Upon each arriving transaction, if the transaction is admitted by evicting another transaction, it enforces that the fee of the admitted transaction must not be lower than a pre-set lower bound. 
%Under this framework, we propose two specific admission policies, named Policy $mf$ and E. 

\noindent{\bf Evaluation}:
We analyze \textsc{saferAd-CP} and prove the eviction-based DoS security with eviction bound: lower-bound the attack costs (i.e., the adversarial transaction fees) under eviction DoSes.

We implement a \textsc{saferAd-CP} prototype over Geth by evicting only ``child-less'' transactions. We collect real-world transactions from the Ethereum mainnet, and replay them against the \textsc{saferAd-CP} prototype. The evaluation shows \textsc{saferAd-CP} incurs a revenue change between $[+1.18\%, +7.96\%]$ under replayed real-world transaction histories. \textsc{saferAd-CP}'s latency in serving transactions is at most $7.3\%$ different from a vanilla Geth client. Under the attacks that latest Geth is known to be vulnerable for, \textsc{saferAd-CP} increases the attacker's cost by more than $10^4$ times.

\noindent{\bf Contributions}:
Overall, this paper makes the following contributions:

\vspace{2pt}\noindent$\bullet$\textit{ 
New security definitions}:
Presented the first economic-security definitions of mempools against asymmetric eviction DoSes. Specifically, we formulate a general pattern of mempool eviction DoSes and present a security definition to mitigate this attack pattern.

\vspace{2pt}\noindent$\bullet$\textit{ 
Proven secure designs}:
Presented the first eviction DoS-secure mempool designs, \textsc{saferAd-CP}, that achieves proven eviction-based security. The security stems from \textsc{saferAd-CP}'s design in lower-bounding the attack cost under eviction DoSes.

\vspace{2pt}\noindent$\bullet$\textit{ 
Performance \& utility evaluation}:
Implemented a \textsc{saferAd-CP} prototype on Geth and evaluated its performance and utility by replaying real-world transaction traces. It shows \textsc{saferAd-CP} incurs negligible overhead in latency and maintains a high lower bound on validator revenue under any adversarial traces.

\section{Background and Notations}
\noindent{\bf Transactions}: In Ethereum, a transaction $tx$ is characterized by a $sender$, a $nonce$, a $price$, an amount of computation it consumes, $GasUsed$, and the $data$ field relevant to smart-contract invocation. Among these attributes, transaction $sender$, $nonce$, and $price$ are ``static'' in the sense that they are independent of smart contract execution or the context of block validation (e.g., how transactions are ordered). This work aims at lightweight mempool designs leveraging only static transaction attributes. We denote an Ethereum transaction by its $sender$, $nonce$, and $price$. For instance, a transaction $tx_1$ sent from Account $A$, with nonce $3$, of price $7$ is denoted by $\langle{}A3, 7\rangle{}$.

A transaction $tx_1$ is $tx_2$'s ancestor or parent if $tx_1.sender=tx_2.sender\land{}tx_1.nonce<tx_2.nonce$. We denote the set of ancestor transactions to transaction $tx$ by $tx.ancestors()$. Given the transaction set in a mempool state $ts$, $tx$'s ancestor transactions in $ts$ and with consecutive nonces to $tx$ are denoted by set $tx.ancestors()\cap{}ts$. For instance, suppose transactions $tx_1,tx_2,tx_3, tx_4$ are all sent from Alice and are with nonces $1$, $2$, $3$, and $4$, respectively. Then, $tx_4.ancestors()\cap{}\{tx_1, tx_3\text{\}}=\{tx_3\text{\}}$.

A transaction $tx$ is a future transaction w.r.t. a transaction set $ts$, if there is at least one transaction $tx'\not\in{}ts$ and $tx'$ is an ancestor of $tx$. Given any future transaction $tx$ in set $ts$, we define function $\textsc{isFuture}(tx, ts)=1$.

\noindent{\bf Transaction fees}: A transaction $tx$'s fee is the product of $GasUsed$ and $price$, that is, $tx.fee=GasUsed\cdot{}price$. $GasUsed$ is determined by a fixed amount ($21000$ Gas) and the smart contract execution by $tx$. In Ethereum, The latter factor is sensitive to various runtime conditions, such as how transaction $tx$ is ordered in the blocks. After EIP-1559, part of price and fees are burnt; in this case, block revenue excludes the burnt part.

\noindent{\bf The notations} used in this paper (some of which are introduced later in the paper) are listed in Table~\ref{tab:notations}. Notably, we differentiate the unordered set and ordered list: Given an unordered transaction set, say $ops$, an ordered list of the same set of transactions is denoted by a vector, $\vec{ops}$. A list is converted to a set $ops$ by the function $unorder(\vec{ops})$, and a set is converted to a list $\vec{ops}$ by the function $toOrder(ops)$.

\begin{table}[!htbp] %force in current page, disable float.
\caption{Notations: txs means transactions.}
\label{tab:notations}\centering{\footnotesize
\begin{tabularx}{0.49\textwidth}{ |l|X|l|X| }
  \hline
 & Meaning &  & Meaning  \\ \hline
$ops$ & Arriving txs
&
$\vec{ops}$ & List of arriving txs
\\ \hline
$st$ & Txs in mempool
&
$\vec{st}$ & List of txs in mempool
\\ \hline
$dc$ & Txs declined or evicted from mempool
&
$m$ & Mempool length 
\\ \hline
$\langle{}A3, 7\rangle{}$ 
& \multicolumn{3}{l|}{A tx of sender $A$, nonce $3$, and price $7$} 
\\ \hline
\end{tabularx}
}
\end{table}

\noindent{\bf Blockchain mempools}:
In blockchains, recently submitted transactions by users, a.k.a., unconfirmed transactions, are propagated, either privately or publicly, to reach one or multiple validator nodes. Mempool is a data structure a blockchain full node uses to buffer the ``unconfirmed'' transactions before these transactions are included in blocks. 

In Ethereum 2.0, transaction propagation follows two alternative paths. A public transaction is propagated to the entire network, while a private transaction is forwarded by a builder to the proposers who he has established the connection with. 
In both scenarios, the mempool faces the same design issues: Because the mempool needs to openly accept a potentially unlimited number of transactions sent from arbitrary EOA accounts, it needs to limit the capacity and enforce policies for transaction admission. In practice, we found the mempool storing public transactions has the same codebase as that storing private mempool.\footnote{For instance, the mempool in Flashbot builder~\cite{me:flashbotbuilder} used to store both private and public transactions is a fork (with no code change) of the mempool storing only public transactions in Geth.}

\section{Threat Model}
\label{sec:threat}

\begin{figure}
  \centering
  \includegraphics[width=0.425\textwidth]{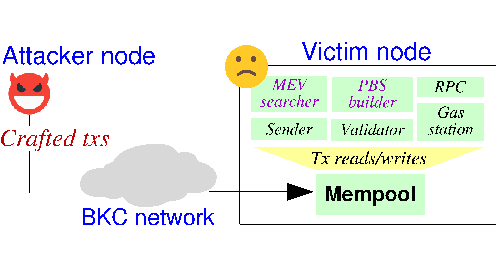}
  \caption{Threat model of a victim mempool: In blue are downstream operators that rely on reading or writing the mempool. In dark blue are the operators in the private transaction path. }
  \label{fig:downstream}
\end{figure} 

% \noindent{\bf Adversary's goals and capacities}: 
In the threat model, an adversary node is connected, either directly or indirectly, to a victim node on which a mempool serves various downstream operators reading or inserting transactions. Figure~\ref{fig:downstream} depicts some example mempool-dependent operators, including a transaction sender who wants to insert her transaction to the mempool, a full node that reads the mempool to decide whether a received transaction should be propagated, a validator or PBS builder that reads the mempool and selects transactions to be included in the next block, an MEV searcher that reads the mempool to find profitable opportunities, a Gas-station service that reads the mempool to estimate appropriate Gas price for sending transactions, etc. 

The adversary's goal is two-fold: 1) Deny the victim mempool's service to these critical downstream operators. This entails keeping all normal transactions out of the mempool so that the transaction read/write requests from operators would fail. 2) Keep the adversary's cost asymmetrically low. This entails keeping the transactions sent by the adversary away from being included in the blockchain. 

The adversary has the capacity to craft transactions and send them to reach the victim mempool. In the most basic model, the adversary is directly connected to the victim node. In practice, the adversary may launch a super-node connected to all nodes and aim at attacking all of them or selecting critical nodes to attack (as done in DETER~\cite{DBLP:conf/ccs/LiWT21}). Alternatively, the adversary may choose to launch a ``normal'' node connected to a few neighbors and propagate the crafted transactions via the node to reach all other nodes in the network.

\section{Security Definition}
\label{sec:secdef}

To start with, we characterize two mempool states at any time: the set of transactions residing in the mempool denoted by $st$, and the set of transactions declined or evicted from the mempool denoted by $dc$. We also characterize the list of confirmed transactions included in produced blocks by $\vec{bks}=\{\vec{b}\}$ and the list of transactions arriving by $\vec{ops}$.

A mempool commonly supports two procedures: transaction admission and block building. We specify the first one, which is relevant to this work. 

\begin{definition}[Tx admission]
\label{def:txadmit}
Given an arriving transaction $tx_i$ at a mempool of state $st_i$, an admission algorithm $\textsc{adTx}$ would transition the mempool into an end state $st_{i+1}$ by admitting or declining $tx_i$ or evicting transactions $te_i$. Formally,

\begin{equation}
\textsc{adTx}(st_i, tx_i) \rightarrow{} st_{i+1}, te_i
\end{equation}
\end{definition}

\begin{definition}[Tx admission timeline]
\label{def:timeline}
In a transaction admission timeline, a mempool under test is initialized at state $\langle{}st_0, dc_0=\emptyset\rangle{}$, receives a list of arriving transactions in $\vec{ops}$, and ends up with an end state $\langle{}st_n, dc_n\rangle{}$. Then, a validator continually builds blocks from transactions in the mempool $st_n$ until it is empty, leading to eventual state $\langle{}st_l=\emptyset, dc_l=dc_n\rangle{}$ and newly produced blocks $\vec{bks_l}$ with transactions in $st_n$. This transaction-admission timeline is denoted by $f(\langle{}st_0, \emptyset\rangle{},\vec{ops})\Rightarrow{}\langle{}st_n, dc_n\rangle{}$. 
\end{definition}

As in the above definition, this work considers the timeline in which block arrival or production does not interleave with the arrival of adversarial transactions. We leave it to the future work for interleaved block arrival and attacks. In the following, we define the mempool security against asymmetric attacks.

\begin{definition}[Mempool eviction security]
\label{def:security:evict}
Consider any mempool that initially stores normal transactions $st_0$, receives a list of benign and adversarial transactions $\vec{ops}$ and converts to the end state $st_n$. 

The mempool is secure against asymmetric eviction attacks, or {\bf $g$-eviction-secure}, if.f. the total transaction fees under any adversarial transactions are higher than $g(st_0)$ where $g(\cdot)$ is a price function that takes as input a mempool state and returns a price value. Formally,

\begin{eqnarray}
\nonumber 
&& \forall{} st_n, \vec{ops},  
\exists \text{function } g(\cdot)
\\
&& \text{ s.t., } \forall{st_0}, fees(st_n) \geq{} g(st_0)
\end{eqnarray}
\end{definition}

The $g$-eviction-security definition ensures that under arbitrary attacks, the total fees of transactions inside the mempool are lower bound by a value dependent only on normal transactions in initial state. 
This definition reflects the following intuition: If the normal transactions in initial state can provide enough block revenue for validators, the $g$-security of mempool ensures the mempool under any adversarial workloads (i.e., attacks) have enough fees or enough block revenue for validators.

\begin{definition}[Mempool locking security]
\label{def:security:locking}
Consider under attacks in which the mempool of initial state storing only benign transactions $\langle{}st_0, \emptyset\rangle{}$ receives a transaction sequence $\vec{ops}$ interleaved of adversarial and benign transactions, transitions its state, and reaches the end state $\langle{}st_{n}, dc_n\rangle{}$, as shown Equation~\ref{eqn:def:attacktimeline2}. 

The mempool is said to be secure against asymmetric locking attacks, or {\bf $h$-locking-secure}, if.f. the maximal price of transactions declined or evicted in the end state under attacks, i.e., $dc_n$, is lower than a certain price function $h()$ on the end-state mempool under attacks, i.e., $h(st_n)$. It is required that for any mempool state $st$, $h(st)$ must be lower than the average transaction price in $st$, namely $\forall{}st, h(st)<avgprice(st)$. Formally,

\begin{eqnarray}
\nonumber 
&& \forall{} st_0, \vec{ops},  
\label{eqn:def:attacktimeline2} f(\langle{}st_0, \emptyset\rangle{}, \vec{ops})\Rightarrow{} \langle{}st_n, dc_n\rangle{}
\\
\nonumber 
&&\exists \text{function } h(\cdot) < avgprice(\cdot)
\\
% \nonumber 
&& \text{ s.t., } maxprice(dc_n) < h(st_{n})
\end{eqnarray}
\end{definition}

The $h$-locking-security definition ensures that the maximal price of declined or evicted transactions from the mempool is upper bound by that of the transaction staying in the mempool.

\section{\textsc{saferAd} Framework}

\subsection{Tx-Admission Algorithm of $CP$}
\begin{algorithm}[h]
\caption{\textsc{saferAd-CP}(MempoolState $st$, Tx $ta$)}
\small \label{alg:adTx:AA:framework}
\begin{algorithmic}[1]
\If{$\textsc{preCkV}(ta, st)==0$} \Comment{Precheck tx validity} 
\label{alg:adTx:AA:0}
  \State \Return $te=ta$; \Comment{Decline invalid $ta$} 
\EndIf
\State $\vec{tes}$=\textsc{torder}($st.findChildless(), price$); 
\label{alg:adTx:AA:1}
\State $te$=$\vec{tes}$.lastTx();
\label{alg:adTx:AA:2}
\If{$\|st\| == m$}
\If{$ta.price \leq te.price $} \label{alg:adTx:AA:3}
 \State \Return $te=ta$; \Comment{Decline $ta$} 
\label{alg:adTx:AA:4}
\Else
  \State $st$.admit($ta$).evict($te$); 
\label{alg:adTx:AA:5}
\EndIf
\Else
%\If{$score(te, st) < score(ta, st)$}\label{alg:adTx:AA:6}
% \State \Return $te=ta$; \Comment{Decline $ta$} \label{alg:adTx:AA:7}
%\Else
  \State $st$.admit($ta$); \label{alg:adTx:AA:8}
  \State $te=$NULL; \label{alg:adTx:AA:9}
\EndIf
\end{algorithmic}
\end{algorithm}

The proposed Algorithm~\ref{alg:adTx:AA:framework} enforces the invariant that each admitted transaction to a mempool evicts at most one transaction from the mempool. 

The algorithm initially maintains a mempool of state $st$ and receives an arriving transaction $ta$. The algorithm decides whether and how to admit $ta$ and produces as the output the end state of the mempool and the evicted transaction $te$ from the mempool. In case that $ta$ is declined by the mempool, $te=ta$.

Internally, the algorithm first pre-checks the validity of $ta$ on mempool $st$ (in Line~\ref{alg:adTx:AA:0}). If $ta$ is a future transaction or overdrafts its sender balance, the transaction is deemed invalid and is declined from entering the mempool. The algorithm only proceeds when $ta$ passes the validity precheck.

It then sorts all childless transactions in mempool ($st.findChildless()$) in descendant order based on tx's price. As described next, function $\vec{tes}=\textsc{torder}(st)$ produces a order of transactions in the mempool, denoted by $\vec{tes}$.
It selects the last transaction on this ordered list, denoted by $te$. That is, $te$ has the lowest score on $\vec{tes}$. 

If the mempool is full and $ta.price \leq te.price$ (Line~\ref{alg:adTx:AA:3}), the algorithm declines the arriving transaction $ta$, that is, $te=ta$ (Line~\ref{alg:adTx:AA:4}). Otherwise, if the mempool is full and $ta.price > te.price$, the algorithm admits $ta$ and evicts $te$ (Line~\ref{alg:adTx:AA:5}).

If the mempool is not full, the algorithm always admits $ta$ to take the empty slot (Line~\ref{alg:adTx:AA:8}).

\subsection{Eviction Secure of $CP$}
Policy \textsc{CP} achieves eviction security in the sense that it ensures the total prices monotonically increase. 

Suppose a mempool runs Algorithm~\ref{alg:adTx:AA:framework} and transitions from state $st_i$ to $st_{i+1}$. No transaction in $st_i$ can be turned into a future transaction in $st_{i+1}$.
Because the eviction candidates can only be the childless transactions (Recall Line~\ref{alg:adTx:AA:1} in Algorithm~\ref{alg:adTx:AA:framework}).
% \begin{lemma}[No tx turning of \textsc{saferAd-CP}]
% \label{lmm:cp1:noturn}
% Suppose a mempool runs Algorithm~\ref{alg:adTx:AA:framework} and transitions from state $st_i$ to $st_{i+1}$. No transaction in $st_i$ can be turned into a future transaction in $st_{i+1}$.
% \end{lemma}

% \begin{proof}
% We prove this lemma by contradiction. Assume $tx\in{}st_i$
% is turned into a future transaction in $st_{i+1}$.

% There must be an ancestor of $tx$, say $tx'$, that is evicted from $st_{i+1}$ by Algorithm~\ref{alg:adTx:AA:framework}. Due to Line~\ref{alg:adTx:AA:2}, $tx'$ must be ordered the last on $\vec{st_i}$. Because $tx'$ is a parent of $tx$, $score_{\textsc{CP}}(tx', st_i) = +\infty$. Suppose $tx$'s descendent, say $tx''$, that has no further child in $st_i$. $score_{\textsc{CP}}(tx'', st_i)\neq{}+\infty<score_{\textsc{CP}}(tx', st_i)$ . In other words, $tx'$ should be ordered before $tx''$. Contradiction.  
% \end{proof}

\begin{lemma}[Monotonic price-increasing]
\label{thm:pr:mono} 
If a mempool runs Algorithm~\ref{alg:adTx:AA:framework}, the sum of transaction prices in the mempool monotonically increases, or the mempool is considered to be monotonic price-increasing. Formally,

\begin{equation}
\label{eqn:iterative:n}
\sum_{tx\in{}st_n} tx.price \geq \sum_{tx\in{}st_0} tx.price
\end{equation}
\end{lemma}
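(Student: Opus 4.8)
The plan is to prove the lemma by induction on the length of the arriving transaction list $\vec{ops}$, establishing the stronger step-wise invariant that each single invocation of Algorithm~\ref{alg:adTx:AA:framework} is non-decreasing in the sum of prices, i.e. $\sum_{tx\in st_{i+1}} tx.price \geq \sum_{tx\in st_i} tx.price$. Chaining this over $i = 0, 1, \dots, n-1$ then yields Equation~\ref{eqn:iterative:n} immediately. So the real work is a case analysis of one admission step.

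First I would dispatch the easy cases. If $ta$ fails the validity precheck (Line~\ref{alg:adTx:AA:0}), the mempool is untouched and $st_{i+1}=st_i$, so the sum is unchanged. If the mempool is not full (Line~\ref{alg:adTx:AA:8}), then $ta$ is simply admitted and no transaction leaves, so the sum can only grow (by $ta.price \geq 0$). If the mempool is full but $ta.price \leq te.price$ (Line~\ref{alg:adTx:AA:3}--\ref{alg:adTx:AA:4}), then $ta$ is declined and again $st_{i+1}=st_i$, so the sum is unchanged.

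The one remaining case is the eviction case (Line~\ref{alg:adTx:AA:5}): the mempool is full, $ta.price > te.price$, so $ta$ is admitted and $te$ is evicted. Here $\sum_{tx\in st_{i+1}} tx.price - \sum_{tx\in st_i} tx.price = ta.price - te.price > 0$, so the sum strictly increases. This is where I would also invoke the remark preceding the lemma — that $te$ is chosen among the childless transactions (Line~\ref{alg:adTx:AA:1}) — to note that evicting $te$ does not invalidate any other residential transaction; strictly speaking this is not needed for the price-sum bound itself (the arithmetic $ta.price - te.price > 0$ suffices regardless), but it is what guarantees the equation compares like with like, i.e. that $st_{i+1}$ is still a well-formed set of valid transactions and no hidden cascading eviction occurs that would remove additional price mass. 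I would make explicit that Algorithm~\ref{alg:adTx:AA:framework} evicts exactly one transaction per admitted transaction, so the accounting is exactly $+ta.price - te.price$ with no other terms.

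The main obstacle — really the only subtlety — is making sure the case split is exhaustive and that in the eviction case nothing beyond $te$ is removed from the pool; in particular one must confirm that the ``childless'' restriction on eviction candidates means the algorithm never needs to evict a transaction's descendants along with it, so that the single-eviction invariant stated just before the lemma genuinely holds. Once that invariant is in hand, the per-step inequality is a one-line computation in each branch, and the induction closing the argument to Equation~\ref{eqn:iterative:n} is routine.
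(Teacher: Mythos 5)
Your proof is correct and follows essentially the same route as the paper: a case analysis on a single admission step (declined, admitted into an empty slot, admitted with eviction of the childless minimum-price transaction $te$), showing the per-step inequality $\sum_{tx\in st_{i+1}} tx.price \geq \sum_{tx\in st_i} tx.price$ in each branch, then chaining over $i$. Your additional observation that the childless-eviction rule rules out cascading removals is exactly the remark the paper makes just before the lemma, so nothing is missing or different in substance.
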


\begin{proof}
Consider that a mempool running Algorithm~\ref{alg:adTx:AA:framework} receives an arriving transaction $ta_i$ and transitions from state $st_i$ to $st_{i+1}$. We aim to prove that,
\begin{equation}
\label{eqn:pr:iterative:i}
\sum_{tx\in{}st_{i+1}} tx.price \geq \sum_{tx\in{}st_i} tx.price
\end{equation}

Now we consider three cases for state transition: 
1) $ta_i$ is declined, 
2) $ta_i$ is admitted by taking an empty slot in $st_i$ (i.e., no transaction is evicted), and 
3) $ta_i$ is admitted by evicting $te_i$.

In Case 1), $st_i=st_{i+1}$. Thus,
\begin{eqnarray}
\nonumber
\sum_{tx\in{}st_i} tx.price
&=& \sum_{tx\in{}st_{i+1}} tx.price
\end{eqnarray}

In Case 2), $st_{i+1}=\{ta_i$\text{\}}$\cup{}st_i$. Thus,

\begin{eqnarray}
\nonumber
\sum_{tx\in{}st_{i+1}} tx.price &=&  
\sum_{tx\in{}st_i\cup{}\{ta_i\text{\}}} tx.price 
\\ \nonumber &=& 
\sum_{tx\in{}st_i} tx.price  + ta_i.price 
\\ \nonumber &\geq{}& 
\sum_{tx\in{}st_i} tx.price
\end{eqnarray}

In Case 3), we denote by $st$ the set of transactions in both $st_i$ and $st_{i+1}$. That is, $st=st_i\cap{}st_{i+1}$. In Case 3), we have $st_{i+1}\setminus{}\{ta_i$\text{\}}$=st_i\setminus{}\{te_i$\text{\}}$=st$. Applying Line~\ref{alg:adTx:AA:3} in Algorithm~\ref{alg:adTx:AA:framework}, we can derive the following:

\begin{eqnarray}
\nonumber
\sum_{tx\in{}st_{i+1}} tx.price &=& 
\sum_{tx\in{}st\cup{}\{ta_i\text{\}}} tx.price
\\ \nonumber &=& 
\sum_{tx\in{}st} tx.price + ta_i.price
\\ \nonumber &\geq{}& 
\sum_{tx\in{}st} tx.price + te_i.price 
\\ \nonumber &=& 
\sum_{tx\in{}st\cup{}\{te_i\text{\}}} tx.price 
\\ \nonumber &=& 
\sum_{tx\in{}st_i} tx.price
\end{eqnarray}

Therefore, in all three cases, Equation~\ref{eqn:pr:iterative:i} holds. In general, for any initial state $st_0$ and any end state $st_n$ that is transitioned from $st_0$ with $i\in[0,n-1]$, one can iteratively apply Equation~\ref{eqn:pr:iterative:i} for $i\in[0,n-1]$ and prove the sum of transaction price monotonically increases. 
\end{proof}

\begin{theorem}[$g$-eviction security of Algorithm~\ref{alg:adTx:AA:framework}]
\label{thm:policya:eviction}
A mempool running Algorithms~\ref{alg:adTx:AA:framework} is $g$-eviction secure. That is, given a sequence of arriving normal and adversarial transactions $\vec{ops}$ to the mempool state, the total transaction fees in the end-state mempool under attacks $st_n$ are lower-bounded by the following:

\begin{eqnarray}
\forall st_0, \vec{ops}, fees(st_n) &\geq{}& g(st_0) 
\label{eqn:mf:bound}
\end{eqnarray}
\end{theorem}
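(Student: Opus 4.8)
The plan is to reduce the fee bound to the price bound already proved in Lemma~\ref{thm:pr:mono}, using the single extra fact that any transaction that can legitimately reside in a pending mempool consumes at least the intrinsic Gas cost of $21000$. I would carry this out in three steps. First, I instantiate Lemma~\ref{thm:pr:mono} on the admission timeline $f(\langle{}st_0,\emptyset\rangle{},\vec{ops})\Rightarrow{}\langle{}st_n,dc_n\rangle{}$ of Definition~\ref{def:timeline}: since every element of $\vec{ops}$, benign or adversarial, triggers exactly one invocation of Algorithm~\ref{alg:adTx:AA:framework}, and each invocation is non-decreasing in the price sum, iterating the single-step inequality gives $\sum_{tx\in{}st_n} tx.price \ge \sum_{tx\in{}st_0} tx.price$ no matter how the adversary chooses $\vec{ops}$.

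Second, I bridge from prices to fees. Every transaction retained in $st_n$ has passed the validity precheck $\textsc{preCkV}$ (Line~\ref{alg:adTx:AA:0}) and, more to the point, $st_n$ is by construction a pending-transaction pool, so each $tx\in st_n$ is a well-formed Ethereum transaction and hence $GasUsed(tx)\ge 21000$. Therefore
\[
fees(st_n)=\sum_{tx\in{}st_n} GasUsed(tx)\cdot tx.price \;\ge\; 21000\cdot\sum_{tx\in{}st_n} tx.price \;\ge\; 21000\cdot\sum_{tx\in{}st_0} tx.price .
\]
Third, I exhibit the price function witnessing Definition~\ref{def:security:evict}: set $g(st_0)=21000\cdot\sum_{tx\in{}st_0} tx.price$, which depends only on the initial (normal) mempool state and returns a price-scaled value as required (and is itself a lower bound on $fees(st_0)$, matching the intended reading of $g$-security). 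The chain above then yields $fees(st_n)\ge g(st_0)$ for all $st_0$ and all $\vec{ops}$, which is exactly Equation~\ref{eqn:mf:bound}.

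\textbf{Main obstacle.} The difficulty here is conceptual rather than computational: $fees$ depends on the non-deterministic, execution- and ordering-sensitive quantity $GasUsed$, whereas Algorithm~\ref{alg:adTx:AA:framework} and Lemma~\ref{thm:pr:mono} reason purely about the static attribute $price$. The argument goes through only because $GasUsed$ of any valid transaction admits a uniform, context-independent lower bound ($21000$ Gas), which is what lets us discharge the non-determinism and pass from a price inequality to a fee inequality. Note that a bound phrased directly in terms of $fees(st_0)$ (rather than the price sum of $st_0$) would not be obtainable this way, since transactions in $st_0$ may have arbitrarily large $GasUsed$ that the admission policy never observes; so the statement of $g$ must be price-based, and I would make this explicit when defining $g$.
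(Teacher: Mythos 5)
Your proposal is correct and follows essentially the same route as the paper's own proof: invoke Lemma~\ref{thm:pr:mono} to get the monotone price-sum bound, then multiply by the intrinsic $21000$ Gas floor to pass from prices to fees. Your version is in fact slightly more careful than the paper's, since you explicitly exhibit the witness $g(st_0)=21000\cdot\sum_{tx\in st_0}tx.price$ and note why the bound must be phrased in terms of the price sum rather than $fees(st_0)$.
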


\begin{proof}
Due to Lemma~\ref{thm:pr:mono}, we prove the property of increasing sum of transaction prices during the mempool state transition, that is, Equation~\ref{eqn:iterative:n}. 

And the minimal gas for each transaction is $21000$. We have a lower bound for transaction $fee$ which is $21000\cdot{}tx.price$, that is,

\begin{eqnarray}
\nonumber
& fees(\vec{st_n}) \geq{} 21000\cdot{}\sum_{tx\in{}st_n}tx.price & 
\\ 
\nonumber
& \geq{} 21000\cdot{}\sum_{tx\in{}st_n} tx.price & 
\\ 
\label{eqn:cp:bound}
& \geq{} 21000\cdot{}\sum_{tx\in{}st_0} tx.price & 
\end{eqnarray}

Equation~\ref{eqn:mf:bound} holds.
\end{proof}

\subsection{Locking Insecure of $CP$}

% Policy \textsc{saferAd-CP} is not locking secure.
% Figure~\ref{fig:moex2} shows a counterexample in which we can easily construct an effective locking attack. Specifically, recall that in the figure, $tx_3$ is declined by the mempool running Policy \textsc{saferAd-CP} (i.e., Algorithm~\ref{alg:adTx:AA:framework} with $score_{\textsc{saferAd-CP}}$). Assume transactions $tx_1$ and $tx_2$ are sent by an adversary (i.e., Adversarial account $A$), and $tx_3$ is sent by a benign user ($B$). This admission event implies that a benign user's transaction of price $4$ is declined by a mempool where the average transaction price per slot is $\frac{1+5}{2}=3$, which is lower than $4$. Suppose there is another subsequent normal transaction $tx_5=\langle{}D1, 4\rangle{}$. $tx_5$ would be declined by the mempool. Had $tx_3$ and $tx_5$ been admitted into the mempool, the sum of prices in the end state would have been $4+4=8$, which is higher than the sum of prices in the actual mempool end state, $1+5=6$. This is an asymmetric mempool-locking attack. 
% Note that the example here also violates $h$-locking security where $h$ must be lower than the average transaction price.

% A mempool running Algorithms~\ref{alg:saferadP} is locking insecure. 

Policy \textsc{saferAd-CP} is not locking secure. We illustrate this with a counterexample that highlights how an attacker can exploit policy $CP$ to achieve a denial-of-service (DoS) with minimal cost.

Consider a mempool implementation that imposes no limitations on the number of transactions sent by a single sender. Suppose the mempool is at full capacity, and all transactions originate from a single sender. These transactions are represented as: $tx_1, tx_2, \ldots, tx_{n-1}$, each with a price of 1, while the only transaction without child dependencies, $tx_n$, has the highest nonce and a significantly higher price of 10,000.

The mempool operates under policy $CP$ where an incoming transaction $tx_a$ can only evict an existing transaction if $tx_a$ has a higher price than the transaction it aims to evict. Given this policy, $tx_a$ would need a price greater than 10,000 to evict $tx_n$. This creates a scenario where the attacker can effectively lock the mempool with a low-cost strategy by ensuring $tx_n$ remains in place, thus preventing the eviction of other transactions priced at 1.

% Here is a couter example. For a mempool without any limitation for the number of transactions sent by one sender, we suppose there is only one sender in the full mempool and its transactions are all have price 1 ($tx_1,...,tx_{n-1}$) except for the only childless transaction which is the transaction ($tx_n$) with the largest nonce has price 10000 that is way higher than the normal transaction price. Then, according to the eviction policy. The arriving transaction $tx_a$ could only evict $tx_n$ when its price higher than $tx_n$'s price 10000 which can cause the locking attack with the low cost of attacker. 

\section{Implementation Notes on Geth}

We build a prototype implementation of \textsc{saferAd-CP} on Geth $v1.11.4$. We describe how the pending mempool in vanilla Geth handles transaction admission and then how we integrate \textsc{saferAd-CP} into Geth.

\noindent{\bf
Background: Geth mempool implementation}: In Geth $v1.11.4$, the mempool adopts the price-only Policy patched with extra checks. Concretely, upon an arriving transaction $ta$, \circled{0} Geth first checks the validity of $ta$ (e.g., $ta$ is an overdraft), then it checks if the mempool is full. \circled{1} If so, it finds the transaction with the lowest price as the candidate of eviction victim $te'$. \circled{2} It then removes $te'$ from the primary storage and the secondary index. \circled{3} At last, it adds $ta$ to the primary storage and the secondary index.
If the mempool is not full, \circled{4} Geth adds $ta$ to the primary storage and to the secondary index.

For fast transaction lookup, Geth $v1.11.4$ maintains two indices to store mempool transactions (i.e., each transaction is stored twice): a primary index where transactions are ordered by price and a secondary index where transactions are ordered first by senders and then by nonces. 

In Step \circled{0}, we adopt the patch against MemPurge attacks~\cite{me:mempurge:fix}.

\noindent{\bf
Implementation of \textsc{saferAd-CP} on Geth}: Geth's mempool architecture is well aligned with Algorithm~\ref{alg:adTx:AA:framework}.
We overwrite Step \circled{1} in Geth; instead of finding the transaction with the lowest price, we find the childless transaction with the lowest price, which is to scan Geth's price-based index from the bottom, that is, the transaction with the lowest price; for each transaction, we check if the transaction has a defendant in the mempool by querying the second index. If so, we continue to the transaction above in the primary index and repeat the check. If not, we select the transaction to be eviction victim $te$. Steps \circled{2}, \circled{3}, and \circled{4} remain the same except that reference $te'$ is replaced with $te$.

\section{Revenue Evaluation}

\subsection{Experimental Setups}
\label{sec:setup}

\noindent{\bf Workload collection}: 
For transaction collection, we first instrumented a Geth client (denoted by Geth-m) to log every message it receives from every neighbor. The logged messages contain transactions, transaction hashes (announcements), and blocks. When the client receives the same message from multiple neighbors, it logs it as multiple message-neighbor pairs. We also log the arrival time of a transaction or a block.

We ran a Geth-m node in the mainnet and collected transactions propagated to it from Sep. 5, 2023 to Oct. 5, 2023. In total, $1.5*20^8$ raw transactions are collected, consuming $30$ GB-storage. 
We make the collected transactions replayable as follows: We initialize the local state, that is, account balances and nonces by crawling relevant data from infura.io. We then replace the original sender in the collected transactions with the public keys that we generated. By this means, we know the secret keys of transaction senders and are able to send the otherwise same transactions for experiments. 

We choose $8$ traces of consecutive transactions from the raw dataset collected, each lasting $2.5$ hour. We run experiments on each $2.5$-hour trace. The reason to do so, instead of running experiments directly on the one-month transaction trace, is that the initialization of blockchain state in each trace requires issuing RPC queries (e.g., against infura) on relevant accounts, which is consuming; for a $2.5$-hour trace, the average time of RPC querying is about one day. To make the selected $2.5$-hour traces representative, we cover both weekdays and weekends, and on a single day, daytime and evening times.

\begin{figure}[!bthp]
\centering
\includegraphics[width=0.375\textwidth]{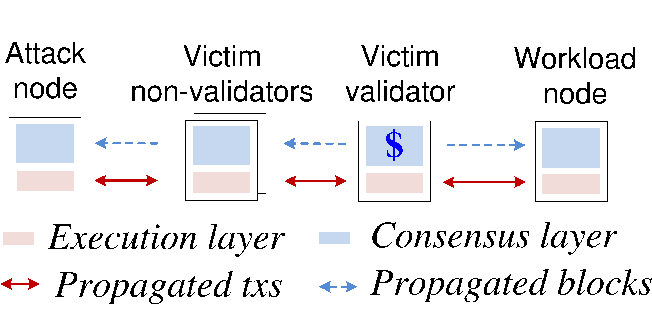}
\caption{Experimental setup}
\label{fig:exp:setup}
\end{figure}

\noindent{\bf Experimental setup}:
For experiments, we set up four nodes, an optional attack node sending crafted transactions, a workload node sending normal transactions collected, a victim non-validator node propagating the transactions and blocks between attack node and victim validator node, and a victim validator node receiving transactions from the workload node and attack node through non-validator node. The victim validator node is connected to both the workload node and victim non-validator node which is connected to the attack node. There is no direct connection between the attack node and the workload node. The attack node runs an instrumented Geth $v1.11.4$ client (denoted by Geth-a) that can propagate invalid transactions to its neighbors. The victim nodes runs the target Ethereum client to be tested; we tested two victim clients: vanilla Geth $v1.11.4$ and \textsc{saferAd-CP} $CP$; the latter one is implemented as addon to Geth $v1.11.4$. The workload node runs a vanilla Geth $v1.11.4$ client. On each node, we also run a Prysm $v3.3.0$ client at the consensus layer. The experiment platform is depicted in Figure~\ref{fig:exp:setup}.
Among the four nodes, we stake Ether to the consensus-layer client on the victim validator node, so that only the victim validator node would propose or produce blocks. 

We run experiments in two settings: under attacks and without attacks. For the former, we aim at evaluating the security of \textsc{saferAd-CP} under attacks, that is, how successful DoS attacks are on \textsc{saferAd-CP}. In this setting, we run all three nodes (the victim, attacker and workload nodes). For the latter, we aim at evaluating the utility of \textsc{saferAd-CP} under normal transaction workloads. In this setting, we only run victim and workload nodes, without running the attack node.

In each experiment, 1) we replay the collected transactions as follows: For each original transaction $tx$ collected, we send a replayed transaction $tx'$ by replacing its sender with a self-generated blockchain address. $GasUsed$ is simulated: If $tx$ runs a smart contract, $tx'$ does not run the same contract. Instead, we make $tx'$ run our smart contract with $tx'.GasUsed$ equals $tx.GasUsed$ if $tx$ is included in the mainnet blockchain, or equal $tx.Gas$ (i.e., the Gas limit set by the transaction sender) if $tx$ is not included.
2) When replaying a collected block, we turn on the block-validation function in Prysm, let it produce and validate one block, and send the block (which should be different from the content of the collected block) to the Geth client. We then immediately turn off block validation before replaying the next transaction in the trace.

\subsection{Revenue Under Normal Transactions}

This experiment compares different mempool policies by evaluating their block revenue under the same transaction workloads. 

\noindent{\bf Experimental method}: We consider two mempool policies, the baseline one in Geth $v1.11.4$ and \textsc{saferAd}-\textsc{CP} on the baseline. Given each policy, we replay the $8$ transaction traces in the same way as before and collect the produced blocks. We report the average revenue per block collected from the blocks. 

\noindent{\bf Results}: Figure~\ref{fig:Revenue_block_real_trace} presents the revenue of the selected $150$ consecutive blocks from the $600$ blocks in Trace $2$. The numbers of the three mempool policies are close, and they fluctuate in a similar way. 
Table~\ref{tab:revenue:8traces} presents the aggregated results by the total revenue and revenue per block. Compared to Geth $v1.11.4$, Policy \textsc{CP}'s revenue per block falls in the range of $[100\%+1.18\%, 100\%+7.96\%]$. This result suggests \textsc{saferAd-CP} incurs no significant change of block revenue under normal transactions.

\begin{figure}[htb]
  \centering
  \includegraphics[width=0.38\textwidth]{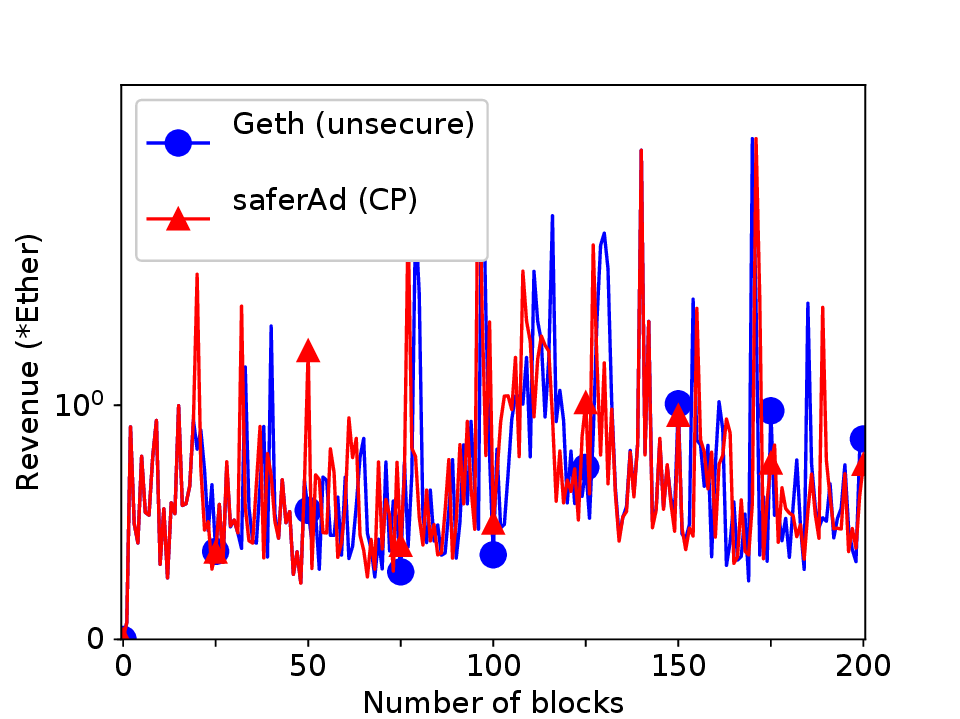}
  \caption{Block revenue w/w.o. \textsc{saferAd-CP} under Trace $1$.}
\label{fig:Revenue_block_real_trace}
\end{figure}

\begin{table}[!htbp]
\caption{Average block revenue (Ether) of different mempool policies. In bold are max and min numbers.}
\label{tab:revenue:8traces}
\centering{\scriptsize
\begin{tabularx}{0.4\textwidth}{|c|X|c| }
\hline 
% Tx   & \multicolumn{1}{l}{Policy \textsc{AP} (Ether)} & \multicolumn{1}{l}{Geth (Ether)}  \\ \cline{2-5}
Trace &  \textsc{CP} (Ether) 
% & \textsc{AP}  
% & $AP2$ 
& Geth (Ether)\\ \hline
1  
&$0.73\ (+3.13\%)$
% &$0.72\ (+1.02\%)$
% &$0.7\ (-0.73\%)$
&$0.71$	 \\ \hline
2
&$1.17\ (+6.97\%)$
% &$1.08\ (-1.19\%)$
% &$1.09\ (+0.27\%)$
&$1.09$	\\ \hline
3 
&$0.85\ (+4.55\%)$
% &$0.81\ (-0.21\%)$	
% &$0.8\ (-0.67\%)$
&$0.81$	 \\ \hline
4
&$0.54\ (+4.56\%)$
% &\textbf{0.54 (+2.55\%)}	
% &\textbf{0.53 (+1.96\%)}
&$0.52$	 \\ \hline   
5 
&$0.78\ (+6.66\%)$
% &$0.72\ (-1.09\%)$	
% &$0.74\ (+0.91\%)$
&$0.73$	 \\ \hline
6 
&$1.19\ (+4.23\%)$
% &\textbf{1.12 (-1.33\%)}
% &$1.13\ (-0.83\%)$
&$1.14$	 \\ \hline
7   
&\textbf{0.56 (+1.18\%)}
% &$0.57\ (+2.1\%)$	
% &\textbf{0.54 (-1.85\%)}
&$0.55$	 \\ \hline
8 
&\textbf{0.64 (+7.96\%)}
% &$0.59\ (+0.74\%)$	
% &$0.6\ (+0.96\%)$
&$0.59$	 \\ \hline
\end{tabularx}
}
\end{table}

% \begin{table}[!htbp]
% \caption{Block revenue of 8 real transaction traces under different policies. Bold are max and min numbers.}
% \label{tab:revenue:8traces}
% \centering{\small
% \begin{tabularx}{0.495\textwidth}{|c|C|X|c|c| }
% \hline 
% Tx   & \multicolumn{2}{l|}{Policy $CP$ (Ether)} & \multicolumn{2}{l|}{Geth (Ether)}  \\ \cline{2-5}
% trace & Total revenue & Revenue per block & Total &Per block \\ \hline
% 1 & 
% $429.38$	&$0.72 (+1.02\%)$
% &$425.04$	&$0.71$	 \\ \hline
% 2&
% $647.25$	&$1.08 (-1.19\%)$
% &$655.05$	&$1.09$	\\ \hline
% 3 &
% $487.55$	&$0.81 (-0.21\%)$	
% &$488.57$	&$0.81$	 \\ \hline
% 4 &
% $322.66$	&\textbf{0.54 (+2.55\%)}	
% &$314.63$	&$0.52$	 \\ \hline   
% 5 &
% $430.31$	&$0.72 (-1.09\%)$	
% &$435.05$	&$0.73$	 \\ \hline
% 6 &
% $674.48$	&\textbf{1.12 (-1.33\%)}	
% &$683.57$	&$1.14$	 \\ \hline
% 7  & 
% $339.68$	&$0.57 (+2.1\%)$	
% &$332.69$	&$0.55$	 \\ \hline
% 8 &
% $355.25$	&$0.59 (+0.74\%)$	
% &$352.64$	&$0.59$	 \\ \hline
% \end{tabularx}
% }
% \end{table}

\subsection{Revenue Under Attacks}
\label{sec:setup:singlenode}

\noindent{\bf Background of attacks}: This experiment evaluates the security of \textsc{saferAd-CP} against known attacks. Given that our implementation is on Geth $v1.11.4$, we choose the attacks still effective on this version: $XT_6$~\cite{wang2023understanding}. Briefly, the attack works in four steps: 1) It first evicts the Geth mempool by sending $384$ transaction sequences, each of $16$ transactions from a distinct sender. The transaction fees are high enough to evict normal transactions initially in the mempool. 2) It then sends $69$ transactions to evict $69$ parent transactions sent in step 1) and turn their child transactions into future transactions. 3) Since now there are more than $5120$ pending transactions in the mempool, Geth's limit of $16$ transactions per send is off. It then conducts another eviction; this time, it sends all $5120$ transactions from one sender, evicting the ones sent in the previous round. 4) At last, it sends a single transaction to turn all transactions in the mempool but one into future transactions. The overall attack cost is low, costing the fee of one transaction.

Because $XT_6$ can evict Geth's mempool to be left with one transaction, we use as Geth's bound the maximal price in a given mempool times the maximal Gas per transaction (i.e., block Gas limit, namely $30$ million Gas).

\noindent{\bf Experimental method}: We set up the experiment platform described in \S~\ref{sec:setup:singlenode}. In each experiment, we drive benign transactions from the workload node. Note that the collected workload contains the timings of both benign transactions and produced blocks. On the $30$-th block, we start the attack. The attack node observes the arrival of a produced block and waits for $d$ seconds before sending a round of crafted transactions. 

The attack phase lasts for $36$ blocks; after the $66$-th block, we stop the attack node from sending crafted transactions. We keep running workload and victim nodes for another $24$ blocks and stop the entire process at the $90$-th block. We collect the blocks produced and, given a block, we report the total fee of transactions included. 

\begin{figure}[!ht]
  \centering
  \subfloat[$XT_6$ w. $8$-sec. delay (single node)]{%
\includegraphics[width=0.225\textwidth]{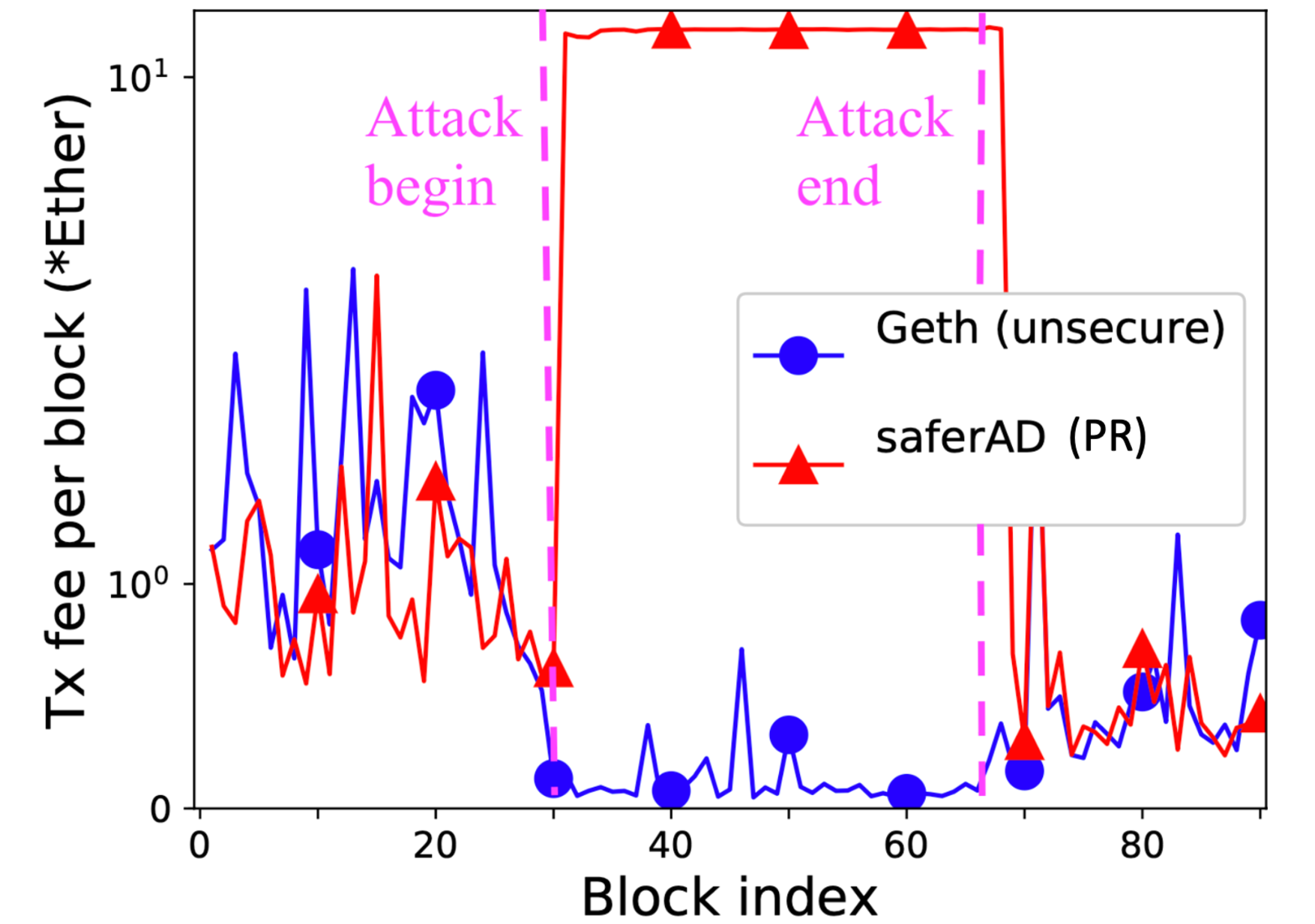}
    \label{fig:saferad-attack-eval}
    \label{fig:successrate-geth}}%
  \subfloat[$XT_6$ w. varying delay (single node)] {%
   \includegraphics[width=0.225\textwidth]{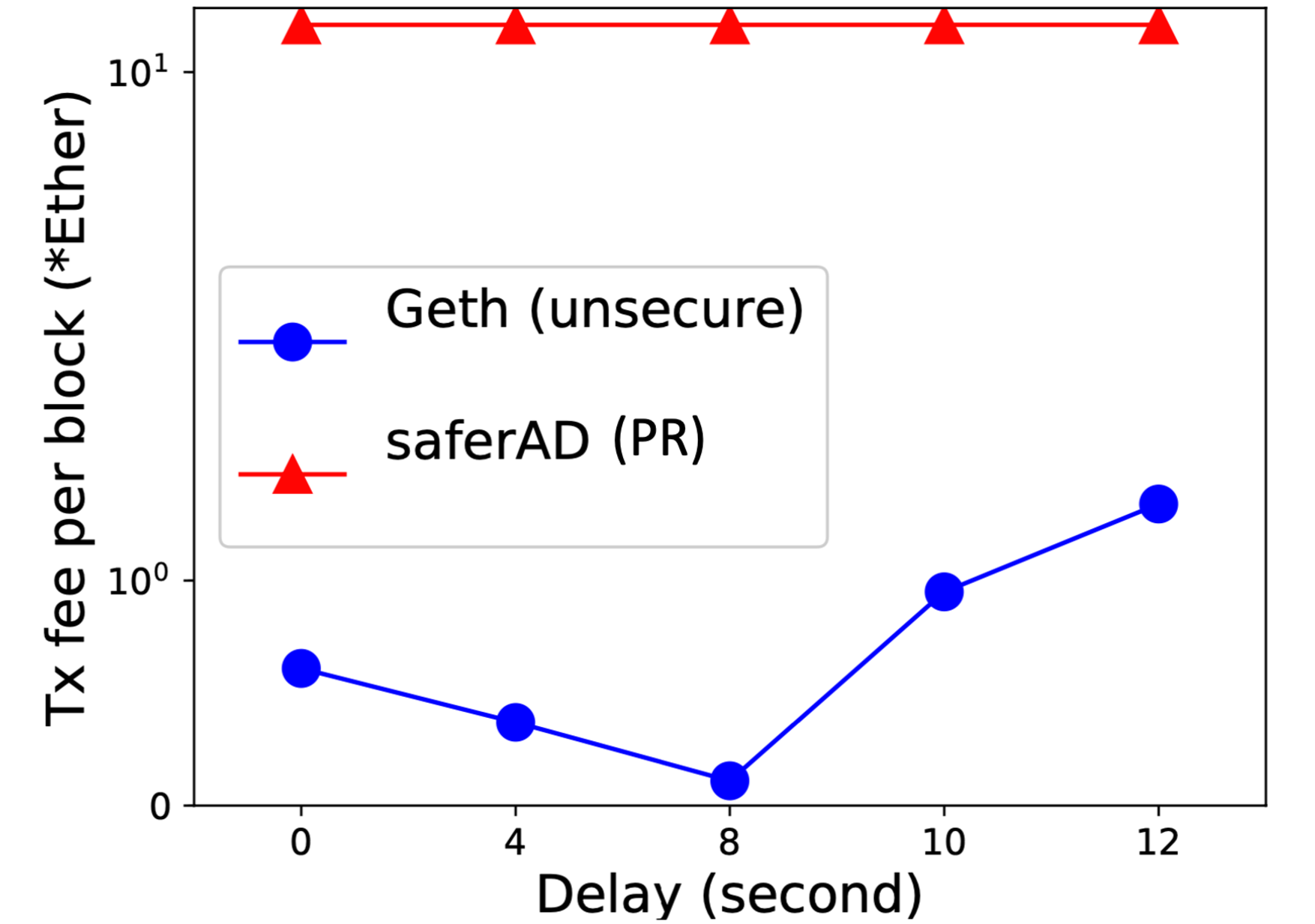}
  \label{fig:delay-geth}}
  \caption{Validator revenue under attacks: With and without \textsc{saferAd-CP} defenses.}%
\end{figure}

\noindent{\bf Results}: Figure~\ref{fig:saferad-attack-eval} reports the metrics for $XT_6$ on two victim clients: vanilla Geth $v1.11.4$ and \textsc{saferAd-CP}. Before the attack is launched (i.e., before the $30$-th block), the two clients produce a similar amount of fees for benign transactions, that is, around $0.7 - 2.0$ Ether per block. As soon as the attack starts from the $30$-th block, Geth's transaction fees quickly drop to zero Ether, which shows the success of $XT_6$ on unpatched Geth $v1.11.4$. There are some sporadic spikes (under $0.7$ Ether per block), which is due to that $XT_6$ cannot lock the mempool on Geth. 
Patching Geth with \textsc{saferAd-CP} can fix the vulnerability. After the attack starts on the $30$-th block, the transaction fees, instead of decreasing, actually increase to a large value; the high fees are from adversarial transactions and are charged to the attacker's accounts. The high fees show the effectiveness of \textsc{saferAd-CP} against known eviction-based attacks ($XT_6$).

Figure~\ref{fig:delay-geth} shows the total fee per block under $XT_6$ with varying delays, where the delay measures the time between when a block is produced and when the next attack arrives.
The results of Geth $v1.11.4$ show that with a short delay, the total transaction fees in mempool are high because $XT_6$ cannot lock a mempool, and the short block-to-attack delay leaves enough time to refill the mempool. With a median delay (e.g., sending an attack $8$ seconds after a block is produced), the attack is most successful, leaving mempool at zero Ether. With a long delay, the in-mempool transaction fees grow high due to the attack itself being interrupted by the block production. Under varying delays, the transaction fees of the mempool practicing \textsc{saferAd-CP} policy would remain constant at a high value, showing the defense effectiveness against attacks of varying delays.

\subsection{Estimation of Eviction Bounds}
\label{sec:eval:eviction:bounds}

This experiment estimates the eviction bounds of different admission policies under real-world transaction workloads. 

\noindent{\bf Experimental method}: In the experiment, we replayed the transaction traces we collected in \S~\ref{sec:setup} each of two Ethereum clients, be it either $CP$, or vanilla Geth $v1.11.4$ . In each run, right after producing each block, say $bk_i$, we record the mempool snapshot $st_i$. Then, assuming an attack starts right after the block $bk_i$ is produced and lasts for the next $10$ blocks, we estimate the lower bound of fees in the mempool right after block $bk_{i+10}$ under arbitrary eviction attacks.

1) For $CP$, we use Equation~\ref{eqn:cp:bound} as the estimation bound.
2) For the baseline, we consider vanilla Geth $v1.11.4$ under $XT_6$; instead of mounting actual attacks (which is time-consuming), we estimate the attack damage by considering that the mempool under attack contains only one transaction, which is consistent with the mempool end-state under an actual $XT_6$ attack.

\begin{figure}[htb]
  \centering
  \includegraphics[width=0.38\textwidth]{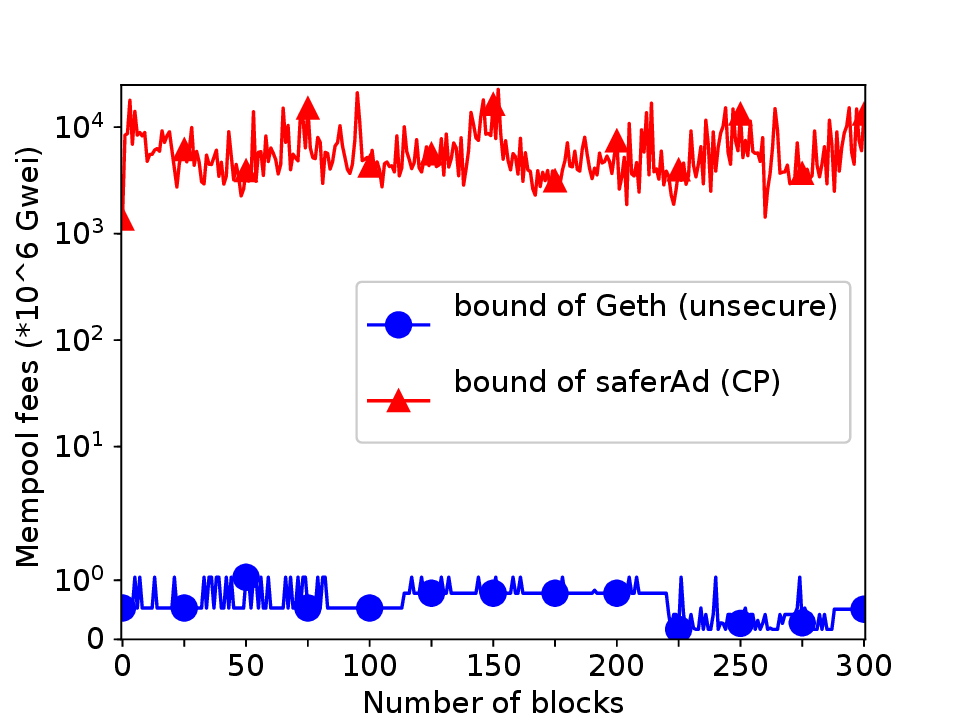}
  \caption{Block revenue lower bounds.}
\label{fig:Block_revenue_lower_bound_short}
\end{figure}

\begin{table}[!htbp]
\caption{Average, $95$-th, and $5$-th percentile of eviction bounds under different policies on eight transaction traces}
\label{tab:eviction:bound}
\centering{\small
\begin{tabularx}{0.4\textwidth}{|l|X|X|X|}
\hline 
  & Avg. bound (Ether) & $95\%$ bound (Ether) & $5\%$ bound (Ether) \\ \hline
$CP$ & $6.05$	&$13.72$	&$2.63$ \\ \hline
% \textsc{AP} & $101.05$	&$299.60$	&$16.41$ \\ \hline
% $AP2$ & $87.46\%$	&$92.32\%$	&$80.37\%$ \\ \hline
Geth & $0.63\cdot{}10^{-3}$	&$1.05\cdot{}10^{-3}$	&$0.17\cdot{}10^{-3}$ \\ \hline
\end{tabularx}
}
\end{table}

\noindent{\bf 
Results}: 
Figure~\ref{fig:Block_revenue_lower_bound_short} presents the results of estimated bounds over time. Compared to the mempool fees post attacks in Geth $v1.11.4$, both \textsc{saferAd} policies achieve high eviction bounds. The bound of $CP$ is higher than that of Geth $v1.11.4$ by $4$ orders of magnitude.
Table~\ref{tab:eviction:bound} presents statistics of the estimated bounds on the two clients tested. It includes the average, $95$-th, and $5$-th percentile of eviction bounds. \textsc{saferAd-CP} achieves statistically higher bounds than the baseline of Geth. On average, $CP$'s eviction bound is $6.05$ Ether, which is much higher than the $0.63\cdot{}10^{-3}$ Ether in Geth under attacks.
And specifically, for $CP$, 5\% of its bounds exceed $13.72$ Ether, and 95\% exceed $2.63$ Ether.

\section{Performance Evaluation}
\subsection{Platform Setup}

In this work, we use Ethereum Foundation's test framework~\cite{me:txpooltest} to evaluate the performance of the implemented countermeasures on Ethereum clients. Briefly, the framework runs two phases: In the initialization phase, it sets up a tested Ethereum client and populates its mempool with certain transactions. In the workload phase, it runs multiple ``rounds'', each of which drives a number of transactions generated under a target workload to the client and collects a number of performance metrics (e.g., latency, memory utilization, etc.). In the end, it reports the average performance by the number of rounds.
In terms of workloads, we use one provided workload (i.e., ``Batch insert'') and implement our own workload (``Mitigated attacks'').

\subsection{Experimental Results}

\noindent{\bf Workload ``Batch insert''}: 
We use the provided workload "Batch insert" that issues \texttt{addRemote} calls to the tested \texttt{txpool} of the Geth client. We select this workload because all our countermeasures are implemented inside the \texttt{addRemote} function. In the initialization phase, this workload sends no transaction to the empty mempool. In the workload phase, it runs one round and sends $n_0$ transactions from one sender account, with nonces ranging from $1$ to $n_0$, and of fixed Gas price $10000$ wei. When $n_0$ is larger than the mempool size, Geth admits the extra transactions to the mempool, buffers them, and eventually deletes them by running an asynchronous process that reorg the mempool (i.e., Function \texttt{pool.scheduleReorgLoop()} in Geth).

\begin{figure}[!ht]
\centering
\subfloat[Workload ``Batch insert'']{%
    \includegraphics[width=0.25\textwidth]{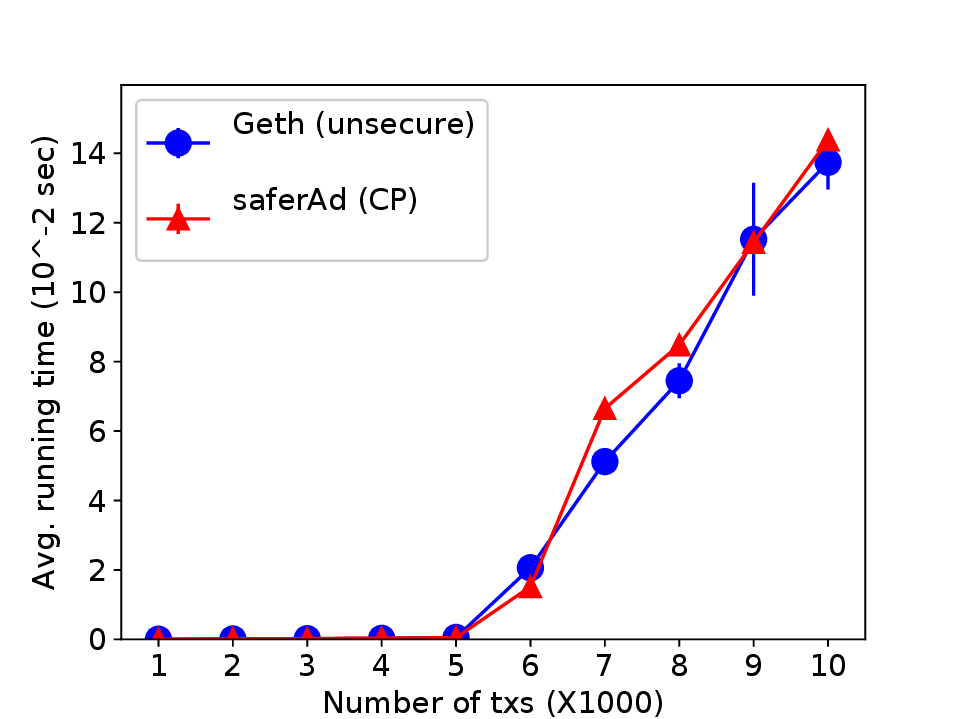}
    \label{fig:mentor-case}}
\subfloat[Workload ``Mitigated attack TN1'']{%
    \includegraphics[width=0.25\textwidth]{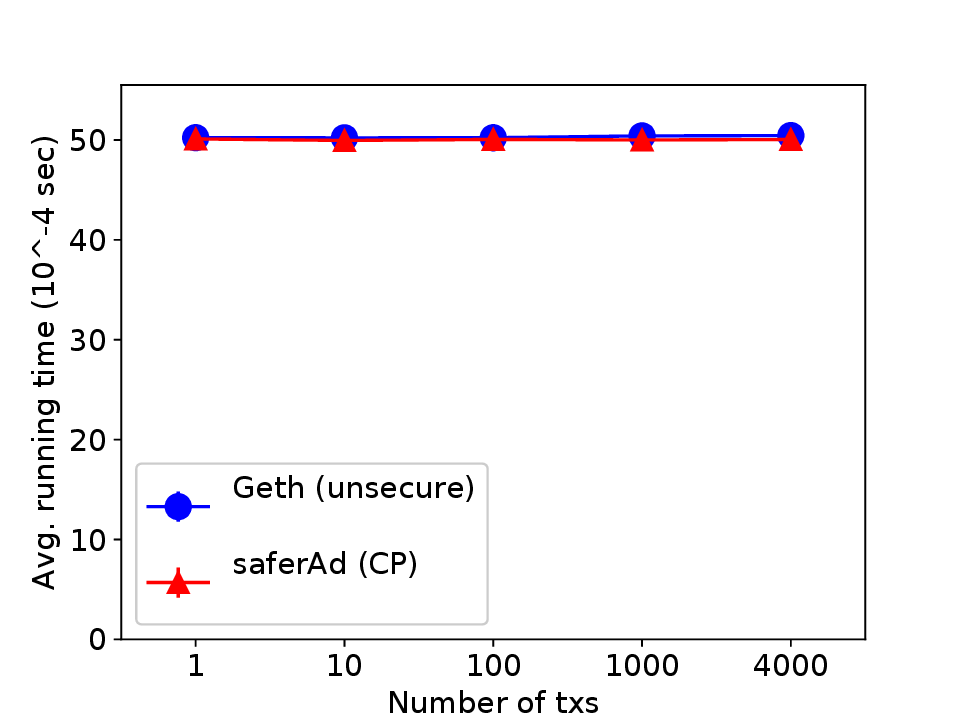}
    \label{fig:corner-td1}}%
\caption{Running time of \textsc{saferAd-CP} on Geth $v1.11.4$}%
\end{figure}

\begin{figure}[!ht]
\centering
\subfloat[Workload ``Batch insert'']{%
    \includegraphics[width=0.25\textwidth]{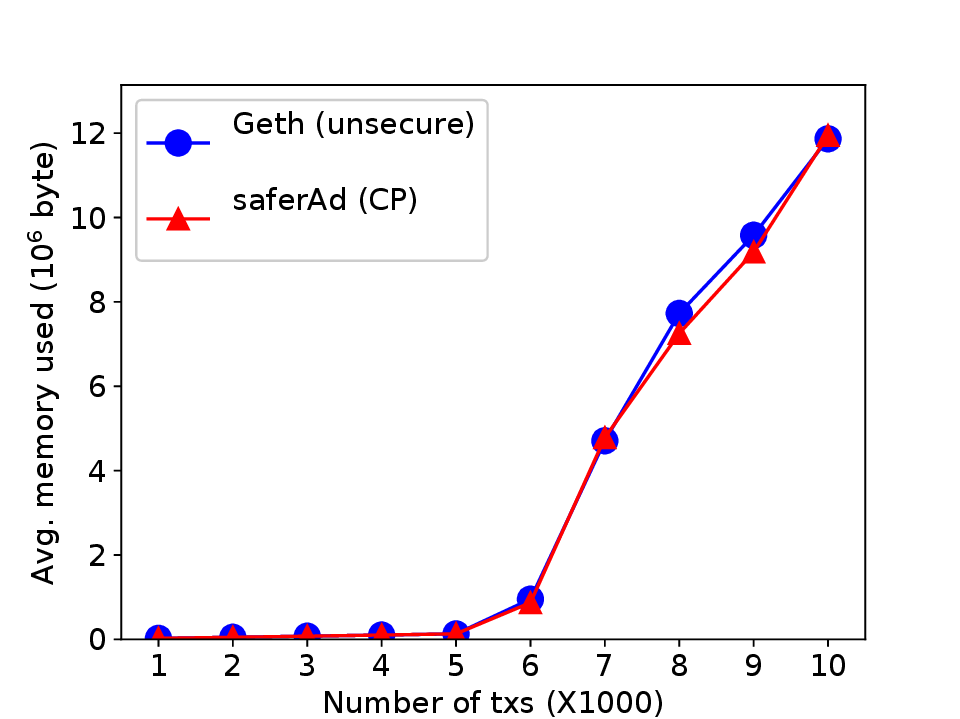}
    \label{fig:mentor-case_memory}}
\subfloat[Workload ``Mitigated attack TN1'']{%
    \includegraphics[width=0.25\textwidth]{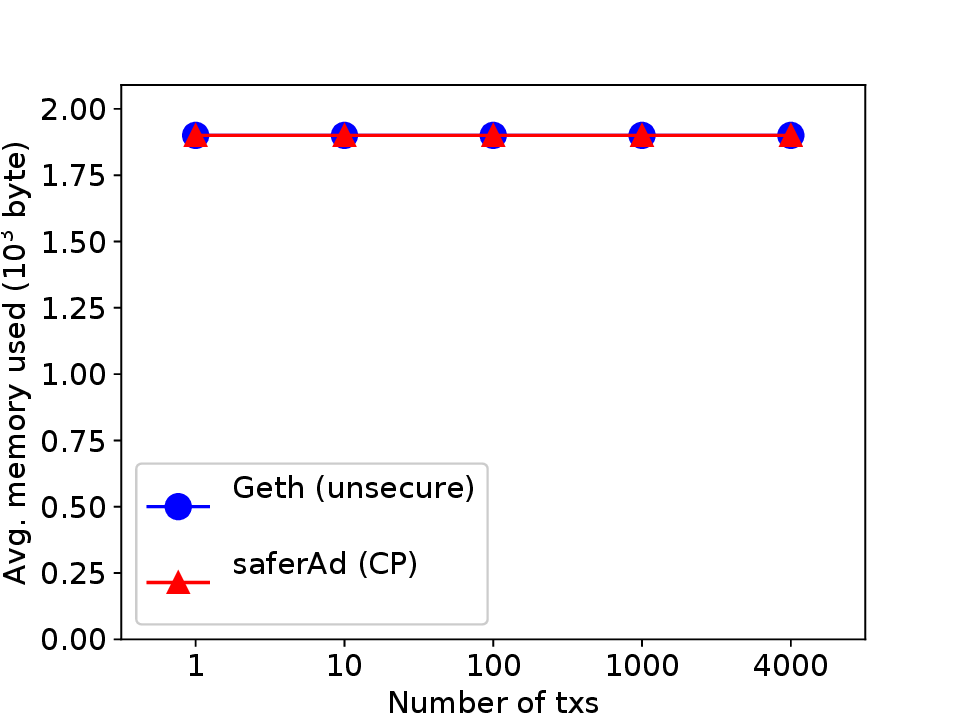}
    \label{fig:corner-td1_memory}}%
\caption{Memory usage of \textsc{saferAd-CP} on Geth $v1.11.4$}%
\end{figure}

{
%\color{red}
We report running time and memory usage in Figure~\ref{fig:mentor-case} and Figure~\ref{fig:mentor-case_memory}, respectively. In each figure, we vary the number of transactions $n_0$ from $1000$ to $10000$, where $n_0 \in [1000, 5000]$ indicates a non-full mempool, and $n_0 \in [5000, 10000]$ indicates a full mempool. It is clear that on both figures, the performance overhead for a non-full mempool is much lower than that for a full mempool; and when the mempool is full, both the running time and memory usage linearly increase with $n_0$. The high overhead is due to the expensive mempool-reorg process triggered by a full mempool.
Specifically, 1) upon a full mempool ($n_0>5000$), on average, Geth $v1.11.4$ hardened by \textsc{saferAd-CP} causes $1.073\times$ running time and $1.000\times$ memory usage, compared to vanilla Geth $V1.11.4$.
2) When the mempool is not full ($n_0\leq{}5000$), on average, Geth $v1.11.4$ hardened by \textsc{saferAd-CP} causes $1.017\times$ running time and $1.000\times$ memory usage, compared to vanilla Geth $V1.11.4$.
Overall, the performance overhead introduced by \textsc{saferAd-CP} is negligible.
} 

Note that Geth $v1.11.4$ that mitigates DETER causes $1.083\times$ running time compared to vanilla Geth $v1.11.3$, which is vulnerable to DETER.

\noindent{\bf Workload ``Mitigated attack TN1''}: 
We additionally extend the framework with our custom workloads for countermeasure evaluation. Here, we describe the performance under one custom workload, "Mitigated attack TN1".

In the initialization phase, it first sends $n_1$ pending transactions to an empty mempool sent from $n_1'$ accounts, each of which has transactions of nonces ranging from $1$ to $\frac{n_1}{n_1'}$. All the parent transactions of nonce $1$ are of Gas price $1,000$ wei, and the others are of $200,000$ wei. It then sends $1024$ future transactions and $5120-n_1$ pending transactions from different accounts, all with nonce $1$, to fill the mempool. In the workload phase, it sends $n_1'$ transactions, each of price $20,000$ wei, to evict the parents and turn children into future transactions. 

In each experiment, we fix $n_1'$ at $10$ accounts and vary $n_1$ to be $10, 100, 1000$ and $4000$. We run the workload ten times and report the average running time and memory used with the standard deviation, as in Figure~\ref{fig:corner-td1} and Figure~\ref{fig:corner-td1_memory}. The running time and memory used are insensitive to the number of transactions ($n_1$). The performance overhead introduced by \textsc{saferAd-CP} is negligible.

% {\color{red}
% \input{text/change_wanning}}

%\input{text/paper43.tex}

%\input{text/old/v13_defense.tex}

\bibliographystyle{plain}
\bibliography{main}

% \appendix
% \input{text/appendix.tex}

%ZZZ
%\clearpage \input{text/appendix_prior_designs.tex}
%\clearpage \input{text/appendix_outdated_writing.tex}

\end{document}
\endinput